\documentclass{article}

\usepackage[maxbibnames=99]{biblatex}
\usepackage{amsmath,amssymb,amsfonts,amsthm}
\usepackage{algorithmic}
\usepackage{graphicx}
\usepackage{textcomp}
\usepackage[dvipsnames]{xcolor}
\usepackage{url}
\usepackage[ruled, linesnumbered, vlined, noend]{algorithm2e}

\newcommand{\bI}{\mathbf{I}} 
\newcommand{\bP}{\mathbf{P}} 
\newcommand{\bS}{\mathbf{S}} 
\newcommand{\bD}{\mathbf{D}}

\newcommand{\bH}{\mathbf{H}}

\newcommand{\bA}{\mathbf{A}} 
\newcommand{\adj}{\mathsf{Adj}} 
\newcommand{\bB}{\mathbf{B}}

\newcommand{\bx}{\mathbf{x}}
\newcommand{\by}{\mathbf{y}}
\newcommand{\adjlpha}{\boldsymbol{\alpha}}

\newcommand{\F}{\mathbb{F}}
\newcommand{\Z}{\mathbb{Z}}

\usepackage{mathtools}

\newcommand{\leftquote}{``}
\newcommand{\rightquote}{''}

\newcommand{\acode}{\mathcal{A}}
\newcommand{\bcode}{\mathcal{B}}

\newcommand{\eqdef}{\stackrel{\text{def}}{=}}

\makeindex

\DeclareMathOperator{\GL}{GL}

\theoremstyle{plain}

\newtheorem{theorem}{Theorem}

\newtheorem{property}{Property}
\newtheorem{proposition}[theorem]{Proposition}

\newtheorem{definition}[theorem]{Definition}

\newtheorem{remark}[theorem]{Remark}

\numberwithin{equation}{section}

\usepackage{hyperref}

\allowdisplaybreaks

\usepackage{authblk}

\author[1]{Michele Battagliola}
\author[2]{Anna-Lena Horlemann}
\author[3]{Abhinaba Mazumder}
\author[4]{Rocco Mora}
\author[5]{Paolo Santini}
\author[6]{Michael Schaller}
\author[7]{Violetta Weger}

\affil[1,5]{Marche Polytechnic University, Italy}
\affil[1,2]{University of St.Gallen, Switzerland}
\affil[4]{University of Montpellier, France}
\affil[3,6]{University of Zurich, Switzerland}
\affil[7]{Technical University of Munich, Germany}

{
    \makeatletter
    \renewcommand\AB@affilsepx{: \protect\Affilfont}
    \makeatother

    \affil[ ]{Email}

    \makeatletter
    \renewcommand\AB@affilsepx{, \protect\Affilfont}
    \makeatother
    \affil[1]{battagliola.michele@proton.me}
    \affil[2]{anna-lena.horlemann@unisg.ch}
    \affil[3]{abhinaba.mazumder@math.uzh.ch}
    \affil[4]{rocco.mora@lirmm.fr}
    \affil[5]{p.santini@staff.univpm.it}
    \affil[6]{michael.schaller@math.uzh.ch}
    \affil[7]{violetta.weger@tum.de}
}

\title{The Power of Power Codes: New Classes of Easy Instances for the Linear Equivalence Problem}

\begin{document}

\maketitle

\begin{abstract}
Given two linear codes, the Linear Equivalence Problem (LEP) asks to find (if it exists) a linear isometry between them; as a special case, we have the Permutation Equivalence Problem (PEP), in which isometries must be permutations.
LEP and PEP have recently gained renewed interest as the security foundations for several post-quantum schemes, including LESS.
A recent paper has introduced the use of the Schur product to solve PEP, identifying many new easy-to-solve instances. In this paper, we extend this result to LEP. 
In particular, we generalize the approach and rely on the more general notion of power codes.
Combining it with Frobenius automorphisms and Hermitian hulls, we identify many classes of easy LEP instances.
To the best of our knowledge, this is the first work exploiting algebraic weaknesses for LEP.
Finally, we show an improved reduction to PEP whenever the coefficients of the monomial matrix are in a subgroup of the multiplicative group of the finite field.
\end{abstract}

\section{Introduction}

The Code Equivalence Problem (CEP) asks whether there exists a Hamming-metric isometry mapping one linear code to another. 
Whenever the isometry is required to be a permutation, we speak of the Permutation Equivalence Problem (PEP); when instead we consider the more general case of monomials (permutations that can also scale coordinates), then we speak of the Linear Equivalence Problem (LEP).
In recent years, PEP and LEP have gained a renewed interest because of their use as security assumptions in post-quantum cryptographic schemes such as \cite{biasse2020less, barenghi2022advanced, battagliola2025vole, hanzlik2025tanuki, albrecht2025hollow, baldi2025speck}.

Interestingly, attacks on CEP comprise several mathematical tools and techniques, ranging from the use of low weight codewords \cite{beullens2020not, budroni2025two}, weight enumeration of the hull \cite{SSA}, graph theory \cite{magali}, canonical forms \cite{chou2025linear, nowakowski2025improved}, algebraic solvers \cite{saeed2017algebraic} and the Schur product \cite{shur}.

Interestingly, these attacks reveal a rather different situation regarding the hardness of PEP and LEP.
Indeed, several weak instances for PEP are known: whenever the hull is too small, then PEP can be solved in average polynomial time  
\cite{SSA, magali, shur}.
Instead, when it comes to LEP, the only currently known weak instances are those defined over $q$-ary finite fields with $q < 5$ \cite{sendrier2013hardness}.
For $q\geq 5$, no weak instance is known: a random LEP instance with non-trivial parameters is currently deemed hard to solve\footnote{As we discuss later on, up to polynomial factors, LEP can be solved in worst case time $O(q^{k})$, with $k$ being the code dimension.
When $k$ is too small, e.g., it is a constant or $O\big(\log(n)\big)$, we have efficient solvers.}.

\subsubsection*{Our contribution}

In this paper, we identify several previously unknown weak instances. At the basis of our techniques lies the use of \emph{power codes}, exploiting the property that the powers of equivalent codes remain equivalent.
This approach, combined with the concepts of the Frobenius automorphism and the (Hermitian) hull, allows us to derive new easy LEP instances based on the length and dimension of the code.

Remarkably, prior to this paper, power codes in the context of CEP have only been used to attack PEP \cite{shur}. 
Here, we show how power codes can be used to attack (certain instances of) LEP.
Looking ahead, the classes of weak instances we identify, differ significantly from the ones that were already known, which required only $q\leq 4$.
Indeed, our attacks have a strong algebraic flavor and show weak instances for any finite field.
These instances have $k$ sublinear in $n$ and therefore do not affect the security of cryptographic schemes such as LESS.
However, in our case the dimension is allowed to grow as a power of $n$ instead of just polylogarithmically, in which case SSA gives a quasi-polynomial time algorithm.
For instance, the first class we identify requires $k<\sqrt[r]{r!\cdot n}\cdot \big(1+o(1)\big)$, for $r$ an integer such that $q = 2r+1$. 
We further improve the attack for non-prime fields.

Furthermore, for a subgroup $U \subseteq \F_q^\star$, we introduce an intermediate problem LEP($U$) between LEP and PEP and the concept of the \emph{partial-closure}, which gives a reduction from this novel problem to PEP.
Together with power codes this can be used to improve the complexity of the reduction from LEP to PEP compared to the classical reduction.

\section{Preliminaries}\label{sec:preliminaries}

In this section we introduce some preliminary facts. 

\subsection{Notation}

For $q$ a prime power, we denote by $\mathbb{F}_q$ a finite field with $q$ elements and by $\mathbb{F}_q^\star$ its multiplicative group.
We denote matrices/vectors by bold upper/lower case letters.
The identity matrix of size $k$ is denoted by  $\bI_k$. By $\GL_n(\mathbb{F}_q)$ we denote the group of $n \times n$ invertible matrices over $\mathbb{F}_q$.
In addition, we denote by $S_n$ the symmetric group on $n$ elements, and by $\mathrm{diag}(\mathbf d)$ the diagonal matrix whose diagonal is $\mathbf d$.
We denote by 
$\langle \bx, \by \rangle$ the  standard inner product, {i.e.}, 
$\langle \bx, \by \rangle \eqdef \sum_{i=1}^n x_iy_i,$
and by $\bx * \by \eqdef(x_1 y_1,\dots,x_n y_n)$ the component-wise product or \emph{Schur product} of $\bx$ and $\by$. Finally, we denote by $\otimes$   the Kronecker product.

\subsection{Coding Theory Basics}

We now provide some basics on coding theory, see also \cite{Huffman_Pless_2003}.
An $[n,k]_q$ \emph{linear code} $\acode$  is a $k$-dimensional linear subspace of $\F_q^n$. We call $\bA \in \F_q^{k \times n}$ a generator matrix of $\acode$ if its rows form a basis of $\acode$.    The \emph{dual code} of $\acode$ is an $[n,n-k]_q$ linear code  defined as $
\acode^\perp \eqdef \{ \bx \in \F_q^n : \langle \bx, \by \rangle = 0 \text{ for all } \by \in \acode \}$.
A generator matrix $\bH \in \F_q^{(n-k) \times n}$ of  $\acode^\perp$ is called \emph{parity-check matrix} of $\acode$. In particular, we have that $\bA\bH^\top = 0$.

\begin{definition}[Hull]
The \emph{hull} of a linear code $\acode$ is defined as
$\mathcal{H}(\acode) \eqdef \acode \cap \acode^\perp$.

\end{definition}
The dimension of the hull of a random code has been studied in \cite{SendrierHull}. In particular, a random code has a trivial hull  with probability $ \geq 1 - 1/q - 1/q^2,$ see \cite[Lemma 1]{albrecht2025hollow}.

\begin{definition}[Schur Product and Power Code]
    Let $\acode, \bcode$ be linear codes.
    We define the \emph{Schur product} of $\acode$ and $\bcode$ denoted by $\acode \star \bcode$ as
    \[
        \acode \star \bcode \eqdef \langle \{ \bx * \by \mid \bx \in \acode, \by \in \bcode\}\rangle. 
    \]
    Let $\ell$ be a positive integer.
    The $\ell$-th \emph{power code} of $\acode$ is defined as
    $$\acode^{(\ell)} \eqdef \acode^{(l-1)} \star \acode $$
    with $\acode^{(1)} = \acode$.
    When $\ell=2$ we call $\acode^{(2)}$ the \emph{square code}.
\end{definition}

If $\acode$ is an $[n,k]_q$ linear code, then the dimension of $\acode^{(\ell)}$ is upper bounded by $\min\left\{ \binom{k+\ell-1}{\ell}, n\right\}$.

\begin{definition}[Hermitian Inner Product] Let $q=p^{2m}$, for some prime $p$ and positive integer $m.$
For $\mathbf  x,  \mathbf y \in \mathbb{F}_q^n$ let us denote by $\langle  \mathbf x, \mathbf  y \rangle_H$ the \emph{Hermitian} inner product, {i.e.}, 
$\langle \mathbf x, \mathbf  y \rangle_H \eqdef \sum_{i=1}^n x_iy_i^{p^m}.$
\end{definition}

Analogously to the previous definitions we can also define the Hermitian dual, parity-check matrix and hull ($\mathcal{H}_H(\acode)$), by replacing the standard with the Hermitian inner product.

\begin{definition}[Closure]
   Let $\acode$ be an $[n,k]_q$ linear code with generator matrix $\bA\in \F_q^{k\times n}$, let $\alpha \in \mathbb{F}_q$ be a primitive element, and denote $ \adjlpha=(1, \alpha, \ldots, \alpha^{q-2}) \in \mathbb{F}_q^{q-1}$. The \emph{closure} $\widetilde{\acode}$ of $\acode$ is defined as the code with generator matrix $\adjlpha \otimes \bA$.
The closure is an $[n(q-1),k]_q$ code.
\end{definition}

\subsection{The Code Equivalence Problem}

\begin{definition}[Linear  Equivalence Problem (LEP)]\label{def:CE}
Given as input two generator matrices $\bA,\bB\in\mathbb F_q^{k\times n}$  of two linear codes $\acode, \bcode\subseteq \mathbb F_q^n$, the Linear  Equivalence Problem asks to find, if there exists, an invertible matrix $\bS \in \mathrm{GL}_k(\mathbb{F}_q)$, a diagonal matrix $\bD = \mathrm{diag}(\mathbf d)$ with $\mathbf d \in (\F_q^\star)^n$, and a permutation matrix $\bP \in S_n$ such that
\[
\bB = \bS \bA \bD \bP.
\]
If $\bD$ is required to be the identity matrix we call this problem the \emph{Permutation Equivalence Problem (PEP)}.
\end{definition}

Whenever the dimension $h$ of the hull of the input code is small enough, then PEP can be solved on average in polynomial time relying on SSA \cite{SSA} or in worst case quasi-polynomial time by the reduction to the Graph Isomorphism Problem (GIP) \cite{magali}.
The complexity of both algorithms grows as $\widetilde O(n^h)$.
Hence, whenever $h = O\big(\log(n)\big)$, we obtain a quasi-polynomial time solver for PEP.

Moreover, \cite{hull} shows how to transform any LEP instance into a PEP instance, using the closure.
\begin{theorem}[Reducing LEP to PEP]\cite[Theorem 1]{hull}
\label{thm:reduction_LEP_PEP}
    Let $\acode, \bcode$ be two $[n,k]_q$ linear codes. Then $\acode$ is linearly equivalent to $ \bcode$ if and only if 
    $\widetilde{\acode}$ is permutation equivalent to $\widetilde{\bcode}$.
\end{theorem}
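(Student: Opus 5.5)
We prove the two implications separately. Throughout we index the coordinates of the closure by pairs $(j,i)$ with $j\in\{0,\dots,q-2\}$ and $i\in\{1,\dots,n\}$. By the definition of $\adjlpha\otimes\bA$, the column of $\widetilde{\acode}$ at position $(j,i)$ is $\alpha^j\mathbf a_i$, where $\mathbf a_i$ is the $i$-th column of $\bA$. Hence the multiset of columns of a generator matrix of $\widetilde{\acode}$ is exactly $\{\lambda\mathbf a_i : \lambda\in\F_q^\star,\ 1\le i\le n\}$, i.e. every column of $\bA$ together with all its nonzero scalar multiples, each appearing once.

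\emph{($\Rightarrow$)} Suppose $\bB=\bS\bA\bD\bP$ with $\bD=\mathrm{diag}(\mathbf d)$ and $\bP$ induced by $\sigma\in S_n$. Then each column of $\bB$ has the form $\mathbf b_{\sigma(i)}=d_i\bS\mathbf a_i$. The columns of $\adjlpha\otimes\bB$ are therefore $\alpha^j d_i\bS\mathbf a_i$. Since $\lambda\mapsto\lambda d_i$ is a bijection of $\F_q^\star$, for fixed $i$ these run over $\{\mu\bS\mathbf a_i:\mu\in\F_q^\star\}$, which is exactly the set of columns of the block of $\bS(\adjlpha\otimes\bA)$ associated with $i$. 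Consequently the column multisets of $\adjlpha\otimes\bB$ and $\bS(\adjlpha\otimes\bA)$ coincide, so there is a permutation matrix $\bP'$ of size $n(q-1)$ with $\adjlpha\otimes\bB=\bS(\adjlpha\otimes\bA)\bP'$. Explicitly, $\bP'$ sends $(j,i)$ to $(j+\log_\alpha d_i \bmod (q-1),\ \sigma(i))$. This shows that $\widetilde{\bcode}$ is permutation equivalent to $\widetilde{\acode}$.

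\emph{($\Leftarrow$)} Suppose $\adjlpha\otimes\bB=\bS'(\adjlpha\otimes\bA)\bP'$ for some $\bS'\in\GL_k(\F_q)$ and some permutation $\bP'$. Note that the dimensions match: the closure has the same dimension $k$, since $\adjlpha\otimes\bA$ contains $\bA$ as a submatrix and therefore has rank $k$. Then the column multiset $\{\lambda\mathbf b_i\}$ equals the column multiset $\{\lambda\bS'\mathbf a_i\}$. Partition both multisets into classes of columns that are scalar multiples of one another, or equal to zero. Each index $i$ contributes a full orbit $\F_q^\star\mathbf b_i$ of size $q-1$ (or $q-1$ zeros), so each class of $\bB$'s multiset consists of whole orbits. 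The number of indices $i$ whose column $\mathbf b_i$ lies in a given projective class (or is zero) is the class size divided by $q-1$, and likewise for $\bS'\bA$. These counts must agree, so we obtain a bijection $\sigma$ of $\{1,\dots,n\}$ such that $\mathbf b_{\sigma(i)}\in\F_q^\star\,\bS'\mathbf a_i$. If $\mathbf b_{\sigma(i)}=0$, then $\bS'\mathbf a_i=0$ as well, and we choose $d_i=1$. Otherwise we choose $d_i$ with $\mathbf b_{\sigma(i)}=d_i\bS'\mathbf a_i$. This gives $\bB=\bS'\bA\bD\bP$.

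The main subtlety is this backward direction. The permutation $\bP'$ need not respect the block structure, so one cannot read off $\sigma$ directly from $\bP'$. The counting argument on projective classes of columns, with multiplicities divided by $q-1$, replaces that step. It also handles zero and repeated columns cleanly.
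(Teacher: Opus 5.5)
Your proof is correct. The paper gives no argument of its own for this statement (it quotes the result from Sendrier--Simos). Your approach is the standard one: compare the column multisets of $\adjlpha\otimes\bB$ and $\bS'(\adjlpha\otimes\bA)$, then match projective classes with multiplicities divided by $q-1$. It correctly handles a $\bP'$ that ignores the block structure, as well as repeated and zero columns.

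The only slip is cosmetic. For $\bP'$ to carry position $(j,i)$ of $\bS(\adjlpha\otimes\bA)$ to position $(j'',\sigma(i))$ of $\adjlpha\otimes\bB$, you need $\alpha^{j''}d_i=\alpha^{j}$, so the shift is $-\log_\alpha d_i$ rather than $+\log_\alpha d_i$. This does not affect anything, because the existence of $\bP'$ already follows from the equality of the column multisets.
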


The above transformation is, however, not very useful from a cryptanalytic standpoint, due to the following fact.
\begin{proposition}
Let $\acode$ be an $[n,k]_q$ linear code, and let $\widetilde{\acode}$ be its closure.
Then, if $q\geq 5$, $\widetilde{\acode}\subset \widetilde{\acode}^\bot$, hence the hull of $\widetilde{\acode}$ has dimension $k$. 
\end{proposition}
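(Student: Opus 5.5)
The plan is to show directly that every pair of generators of $\widetilde{\acode}$ is orthogonal. Then $\widetilde{\acode}\subseteq\widetilde{\acode}^\perp$, so $\mathcal{H}(\widetilde{\acode})=\widetilde{\acode}\cap\widetilde{\acode}^\perp=\widetilde{\acode}$. Since the closure has dimension $k$ (as stated in the definition of closure), the hull then has dimension $k$.

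Let $\mathbf a_1,\dots,\mathbf a_k$ be the rows of $\bA$. By construction of $\adjlpha\otimes\bA$, the $i$-th row of the generator matrix of $\widetilde{\acode}$ is the concatenation $(\mathbf a_i,\alpha\mathbf a_i,\dots,\alpha^{q-2}\mathbf a_i)$. For two such rows $i,j$, the inner product splits block by block as
\[
\sum_{t=0}^{q-2}\alpha^{2t}\langle \mathbf a_i,\mathbf a_j\rangle=\langle \mathbf a_i,\mathbf a_j\rangle\sum_{t=0}^{q-2}\left(\alpha^{2}\right)^t .
\]
So it suffices to show that $\sum_{t=0}^{q-2}(\alpha^2)^t=0$. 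By bilinearity, this gives orthogonality of all codewords, and hence the claim.

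This geometric sum is the only place where the hypothesis $q\ge 5$ is needed, so it is the key step. Since $\alpha$ is primitive, it has order $q-1$, and therefore $\alpha^2=1$ if and only if $(q-1)\mid 2$, i.e.\ $q\in\{2,3\}$. For $q\ge 5$ (indeed for $q\ge 4$) we have $\beta:=\alpha^2\neq 1$. Moreover $\beta^{q-1}=(\alpha^{q-1})^2=1$. Hence
\[
(\beta-1)\sum_{t=0}^{q-2}\beta^t=\beta^{q-1}-1=0,
\]
and since $\beta-1\neq 0$, the sum vanishes. Note that this argument works in every characteristic: it never divides by $2$.

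Putting these together, every codeword of $\widetilde{\acode}$ lies in $\widetilde{\acode}^\perp$, which proves the proposition. Nothing here is a real obstacle; the only point to handle carefully is the exclusion of the small fields, where $\alpha^2=1$ and the sum equals $q-1\neq 0$ in $\F_q$.
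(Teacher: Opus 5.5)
Your proof is correct and takes essentially the same route as the paper. The paper gives no separate proof of this proposition, but in the partial-closure section it justifies self-orthogonality by $\adjlpha\adjlpha^\top=\sum_{\beta}\beta^2=0$. There the vanishing comes from the fact that multiplication by some $\beta$ with $\beta^2\neq 1$ permutes the group, which is equivalent to your geometric-sum argument with $\alpha^2\neq 1$; your blockwise computation then gives $(\adjlpha\adjlpha^\top)\langle \mathbf a_i,\mathbf a_j\rangle=0$. Your remark that $q\geq 4$ already suffices is also right.
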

If $k$ is too small (say, $k = O\big(\log(n)\big)$), then the resulting PEP instance can be solved in quasi-polynomial time by SSA.
However, whenever the dimension is larger,  we obtain a hard instance for PEP.

\section{Power Codes}
\label{sec:power_codes}

 The main idea of this section is as follows. If $\acode$ and $\bcode$ are linearly equivalent, then their $(q-1)$-th power codes are permutation equivalent. For sufficiently low rate, this observation can be leveraged to recover the underlying permutation, and hence the associated monomial transformation. We refine this approach to extend its applicability to a wider range of parameters. 

Let $\acode$, $\bcode$ be two $[n,k]_q$ linear codes with generator matrices $\bA,\bB$ respectively.
In \cite{magali}, the authors define a function $\textsf{Adj}$ that, given a generator matrix $\bA$ for a linear code with trivial hull, computes the adjacency matrix of an undirected, weighted graph.
We generalize this function: for two $k\times n$ matrices $\bA, \bB\in\mathbb F_q^{k\times n}$ such that $\bA  \bB^\top \in \GL_k(\mathbb{F}_q)$, we define
$$\adj(\acode,\bcode) \eqdef \adj(\bA,\bB) \eqdef \bB^\top (\bA\bB^\top)^{-1}\bA.$$
This definition does not depend on the choice of the generator matrices and for $\bA = \bB$, our function corresponds to the one from \cite{magali}.

Given $\acode,\bcode\subseteq\9F_q^n$ with generator matrices $\6A, \6B$, respectively, we show a deterministic construction of four $[n,k']_q$ codes $\mathcal{A}_1,\mathcal{A}_2, \mathcal{B}_1,\mathcal{B}_2$,   with generator matrices $\6A_1,\6A_2,\6B_1, \6B_2$ with the following property:
\begin{property} \label{property_equivalence}
If $\6B=\6S\6A\6D\6P$ as in Definition \ref{def:CE}, then $\6B_1=\6S_1 \6A_1 \6D^i \6P, \;\6B_2=\6S_2 \6A_2 \6D^j \6P$, with $(q-1)\mid (i+j), 1\le k'<n$, for some $\6S_1,\6S_2\in\mathrm{GL}_{k'}(\mathbb{F}_q)$.
\end{property}
These matrices $\6A_1,\6A_2,\6B_1, \6B_2$ will be obtained using certain Schur product codes. 

Let $\mathsf{Count}$ be the function that, given a vector over $\F_q$, outputs the multiset of entries of that vector, i.e., it counts how many times each element of $\F_q$ appears.
Algorithm~\ref{alg:diag} calls $\mathsf{Count}$ on the vector given by the diagonal entries of the adjacency matrices to decide whether $\acode$ and $\bcode$ are equivalent. We show that this test has no false negatives (but it might have false positives).

\begin{algorithm}[t]
\KwData{finite field size $q$, code length $n$}
\KwIn{$[n,k]_q$ linear codes $\acode,\bcode$}
\KwOut{\leftquote TRUE\rightquote\ or \leftquote FALSE\rightquote}
\vspace{2mm}
\SetAlgoNoLine
Generate $[n,k']_q$ linear codes $\mathcal{A}_1,\mathcal{A}_2,\mathcal{B}_1,\mathcal{B}_2$  from $\acode, \bcode$ satisfying Property \ref{property_equivalence}\;
$\6X=\textsf{Adj}(\mathcal A_1,\mathcal{A}_2)$;\text{ } $\6Y=\textsf{Adj}(\mathcal B_1,\mathcal{B}_2)$\;
\tcc{Compare multiset of the diagonals}
\Return $\mathsf{Count}(x_{1,1},\ldots,x_{n,n}) == \mathsf{Count}(y_{1,1},\ldots,y_{n,n})$;
\caption{Simple solver for LEP\label{alg:diag}
}
\end{algorithm}
\begin{proposition} \label{prop: generalapproach_LEP}
    Let  $\acode, \bcode$ be two $[n,k]_q$ linear codes with generator matrices $\6A,\6B$ respectively. Let $\mathcal{A}_1,\mathcal{A}_2,\mathcal{B}_1,\mathcal{B}_2$ be $[n,k']_q$ codes with generator matrices $\6A_1,\6A_2, \6B_1,\6B_2$ respectively, such that,
    \begin{itemize}
        \item $\6B=\6S\6A\6D\6P \Rightarrow \6B_1=\6S_1 \6A_1 \6D^i \6P, \;\6B_2=\6S_2 \6A_2 \6D^j \6P$, with $(q-1)c=(i+j)$, for some $c \in \Z_{>0},$ and for some $\6S_1,\6S_2\in\mathrm{GL}_{k'}(\mathbb{F}_q)$, i.e., satisfying Property \ref{property_equivalence}.
        \item $\mathcal{A}_1\cap\mathcal{A}_2^\bot=\mathcal{B}_1\cap\mathcal{B}_2^\bot=\{\mathbf{0}\}.$
    \end{itemize}
    If Algorithm~\ref{alg:diag} on input $\mathcal{A}$ and $\mathcal{B}$ outputs ``FALSE'', then $\mathcal{A}$ and $\mathcal{B}$ are not linearly equivalent.
\end{proposition}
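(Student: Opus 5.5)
We prove the contrapositive: assuming $\acode$ and $\bcode$ are linearly equivalent, i.e. $\bB=\bS\bA\bD\bP$, we show that the diagonal of $\6Y=\adj(\bB_1,\bB_2)$ is a permutation of the diagonal of $\6X=\adj(\bA_1,\bA_2)$. Then the two multisets returned by $\mathsf{Count}$ coincide, so Algorithm~\ref{alg:diag} outputs ``TRUE''.

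\textbf{Step 1: both adjacency matrices are defined.} The matrix $\bA_1\bA_2^\top\in\F_q^{k'\times k'}$ is invertible if and only if no nonzero $\mathbf{x}\in\F_q^{k'}$ satisfies $\mathbf{x}\bA_1\bA_2^\top=\mathbf{0}$. Such an $\mathbf{x}$ would give a nonzero codeword $\mathbf{x}\bA_1\in\acode_1\cap\acode_2^\perp$; it is nonzero because $\bA_1$ has full rank $k'$. Hence the hypothesis $\acode_1\cap\acode_2^\perp=\{\mathbf 0\}$ yields invertibility. The same argument applies to $\bB_1\bB_2^\top$. We also use the stated independence of $\adj$ from the choice of generator matrices: replacing $\bA_1$ by $\bS_1\bA_1$ gives $\bA_1^\top\bS_1^\top(\bS_1\bA_1\bA_2^\top\bS_2^\top)^{-1}$, and $\bS_1^\top$ cancels $\bS_1^{-\top}$, with the analogous cancellation for $\bS_2$.

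\textbf{Step 2: transform the adjacency matrix.} Substitute $\bB_1=\bS_1\bA_1\bD^i\bP$ and $\bB_2=\bS_2\bA_2\bD^j\bP$. Since $\bD$ is diagonal and $\bP$ is a permutation matrix, $\bP\bP^\top=\bI$. Therefore
\[
\bB_1\bB_2^\top=\bS_1\bA_1\bD^{i+j}\bA_2^\top\bS_2^\top .
\]
Every entry $d$ of $\bD$ lies in $\F_q^\star$, so $d^{q-1}=1$. Since $(q-1)\mid(i+j)$, this gives $\bD^{i+j}=\bI$. It follows that
\[
\6Y=\bP^\top\bD^{j}\bA_2^\top\bS_2^\top\bigl(\bS_1\bA_1\bA_2^\top\bS_2^\top\bigr)^{-1}\bS_1\bA_1\bD^{i}\bP
=\bP^\top\bD^{j}\,\6X\,\bD^{i}\bP .
\]
Here $\6X=\bA_2^\top(\bA_1\bA_2^\top)^{-1}\bA_1$, which is $\adj$ with the paper's convention $\adj(\bA,\bB)=\bB^\top(\bA\bB^\top)^{-1}\bA$ applied to $(\bA_1,\bA_2)$.

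\textbf{Step 3: compare the diagonals.} Conjugating by the diagonal matrices multiplies the $(l,l)$ entry of $\6X$ by $d_l^{\,j}d_l^{\,i}=d_l^{\,i+j}=1$. So $\bD^j\6X\bD^i$ has exactly the same diagonal as $\6X$. Conjugating by the permutation matrix, $\bP^\top\6M\bP$, only permutes the diagonal entries of $\6M$. Hence $(y_{1,1},\dots,y_{n,n})$ is a permutation of $(x_{1,1},\dots,x_{n,n})$, and $\mathsf{Count}$ returns the same multiset for both.

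The only delicate point is the bookkeeping in Step 2: making sure the scalings $\bD^i$ and $\bD^j$ land on opposite sides of $\6X$, so that on the diagonal they combine into $\bD^{i+j}$. With that settled, the remaining steps are routine.
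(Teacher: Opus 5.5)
Your proof is correct and is essentially the paper's argument: you argue by contraposition, derive $\adj(\bB_1,\bB_2)=\bP^\top\bD^{j}\,\adj(\bA_1,\bA_2)\,\bD^{i}\bP$ using $\bD^{i+j}=\bI$, and then observe that each diagonal entry is multiplied by $d^{i+j}=1$ and the diagonal is only permuted by $\bP$. Your Step 1 usefully spells out that the trivial-intersection hypothesis is exactly what makes $\bA_1\bA_2^\top$ invertible, which the paper leaves implicit. In your aside on basis independence the indices are swapped: the left factor is $\bA_2^\top\bS_2^\top$, which cancels $\bS_2^{-\top}$. This slip is harmless, since the proof never uses that aside.
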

\begin{proof}
    We prove the proposition by contraposition, assuming that $\mathcal{A},\mathcal{B}$ are linearly equivalent. 
    By hypothesis, we have $\6B_1=\6S_1 \6A_1 \6D^i \6P, \;\6B_2=\6S_2 \6A_2 \6D^j \6P$, with $(q-1)c= (i+j)$ for some positive integer $c$ and $\6S_1,\6S_2\in\mathrm{GL}_{k'}(\mathbb F_q)$. 
    Recall that $(\6D^j)^\top=\6D^j$ and that $\6D^i\6D^j=\6D^{i+j}=\6D^{c(q-1)}=\6I_n$, since $a^{q-1}=1$ for any $a\in\9F_q^\star$. Hence, $\textsf{Adj}(\bB_1,\bB_2)$ corresponds to
\begin{align*}&\6B_2^\top(\6B_1 \6B_2^\top)^{-1}\6B_1\\
    =&(\6S_2 \6A_2 \6D^j \6P)^\top ((\6S_1 \6A_1 \6D^i \6P)(\6S_2 \6A_2 \6D^j\6P)^\top)^{-1}(\6S_1 \6A_1 \6D^i \6P)\\
    =&\6P^\top \6D^j \textsf{Adj}(\bA_1,\bA_2) \6D^i \6P.
\end{align*}
Given an $n\times n$ matrix $\6M=(m_{u,v})_{u,v}$, we have $\6D^j\6M\6D^i=(d_{u,u}^j d_{v,v}^i m_{u,v})_{u,v}$. 
In particular, the elements on the diagonal remain unchanged by the transformation:
\[
d_{u,u}^j d_{u,u}^i m_{u,u}=d_{u,u}^{c(q-1)} m_{u,u}=m_{u,u}.
\]
Applying $\6P^\top$ on the left and $\6P$ on the right implies an identical swap of column and row indices. Therefore, the elements on the diagonal are simply permuted among them. It follows that $\mathsf{Count}(x_{1,1},\ldots,x_{n,n}) = \mathsf{Count}(y_{1,1},\ldots,y_{n,n})$ in 
Algorithm~\ref{alg:diag}, hence the output is ``TRUE''.
In other words, whenever Algorithm~\ref{alg:diag} outputs ``FALSE'', $\mathcal{A}$ and $\mathcal{B}$ can not be linearly equivalent.
\end{proof}
However, false positives might happen, when $\mathsf{Count}$ returns the same value, i.e., the diagonals of $\textsf{Adj}(\bA_1,\bA_2)$ and $\textsf{Adj}(\bB_1,\bB_2)$ are permutations of each other, although the two codes $\mathcal{A},\mathcal{B}$ are not linearly equivalent.
Estimating the probability of having false positives is not trivial.
It also potentially depends on the specific construction of the codes $\mathcal{A}_1,\mathcal{A}_2,\mathcal{B}_1,\mathcal{B}_2$, which we have not discussed yet.
However, we have experimentally verified that such a probability can be approximated by the probability that two \textit{random} vectors in $\9F_q^n$ are a permutation of each other, see Table~\ref{tab: LEP_prob}.
For fixed field size $q$, and $n\to \infty$ the latter grows as $q^{q/2}(4\pi n)^{(1-q)/2}$ (cf. \cite[Theorem 4]{richmond2009counting}) so
\begin{align} \label{eq: prob_countdiag}
\9P(\mathsf{Count}(x_{1,1},...,x_{n,n})\neq \mathsf{Count}(y_{1,1},...,y_{n,n}))= \nonumber\\
=1-q^{q/2}(4\pi n)^{(1-q)/2},
\end{align}
i.e., it is polynomially small in the length. Therefore, Algorithm~\ref{alg:diag} succeeds with overwhelming probability.

Thus, it remains to explicitly determine the codes $\mathcal{A}_1,\mathcal{A}_2,\mathcal{B}_1,\mathcal{B}_2$, given the pair $(\acode,\bcode)$. Such a construction depends on the field size $q$, which also heavily impacts the range of applicability of our approach. The crucial condition for our method to work is that the pairs of codes $(\mathcal{A}_1,\mathcal{B}_1)$ and $(\mathcal{A}_2,\mathcal{B}_2)$ should not be related by any other isometries other than the one connecting $\acode,\bcode$.
Hence, a necessary condition clearly becomes that $\mathcal{A}_1,\mathcal{A}_2,\mathcal{B}_1,\mathcal{B}_2$ must not be the full space.
Although our constructions do not generally guarantee to avoid false positives our experiments show that the corresponding probability becomes negligible as $k'$ from Property \ref{property_equivalence} becomes sufficiently small compared to $n$, see Table \ref{tab: LEP_prob}.

\subsection{Odd $q$} \label{ss: odd}
Let $r$ be a positive integer and $\mathcal{A},\mathcal{B}\subseteq \9F_q^n$ so that $q=2r+1$ is an odd prime power. Then, we pick
\[
\mathcal{A}_1\eqdef\mathcal{A}_2\eqdef \mathcal{A}^{(r)},\qquad \mathcal{B}_1\eqdef\mathcal{B}_2\eqdef \mathcal{B}^{(r)}.
\]
We note that in this case the construction boils down to the adjacency matrices and the hulls of the $r$-th power codes, so we need to have a small enough $k$, such that $\bcode^{(r)}$ is not the whole space.
It is easy to see that, if $\6B=\6S\6A\6D\6P$, then $\6B_1=\6B_2=\6S_1 \6A_1\6D^r\6P=\6S_2 \6A_2\6D^r\6P$.
As shown in \cite{shur}, for randomly sampled codes $\mathcal{A},\mathcal{B}$ the dimension of $\7H(\mathcal{A}^{(r)})$ and  $\7H(\mathcal{B}^{(r)})$ is 0 with high probability.  Since $q-1=2r$ which clearly divides $  2r=i+j$, Proposition~\ref{prop: generalapproach_LEP} thus implies that Algorithm~\ref{alg:diag} provides a Probabilistic Polynomial-Time (PPT) algorithm for the average LEP with high probability of success, for $q=2r+1$ and whenever $\binom{k+r-1}{r}<n$, which for fixed $q$ asymptotically translates to
\[k< \sqrt[r]{r!\cdot  n}\cdot (1+o(1)).\]

\subsection{Frobenius automorphisms for non-prime fields} \label{ss: frobenius}
By letting the field size $q$ increase, the higher powers of codes used in the previous subsection quickly reduce the applicability range of the LEP solver.
The rate range loss can be mitigated by means of the Frobenius automorphisms.
Given $q=p^m$, where $p$ is a prime, and a non-negative integer $i$, we consider the image of a $[n,k]_q$ linear code $\acode$ through the $i$-th Frobenius automorphism $\phi_i$, where $\phi_i(x)=x^{p^i}$, (applied component-wise to each codeword):
\[
\acode^{[p^i]} = \{ \phi_i(\6x) \mid \6x \in \acode\}\subseteq \mathbb F_q^n,
\]
which can be readily verified being a $ [n,k]_{p^m}$ linear code.
\footnote{The code $\acode^{[p^i]}$ must not be confused with the power $\acode^{(p^i)}$.}

\begin{theorem} \label{th: frobenius_mon} Let $\acode, \bcode$ be $[n,k]_{q}$ linearly equivalent codes with generator matrices $\6A$ and $\6B$ respectively such that
 $\6B = \6S \6A \6D \6P$ for $\bS \in \mathrm{GL}_k(\mathbb{F}_q)$, $\bD = \mathrm{diag}(\mathbf d)$ with $\mathbf d \in (\F_q^\star)^n$, and $\bP \in S_n$.
Then, $\mathcal{B}^{[p^j]}, \mathcal{A}^{[p^j]}$ are linearly equivalent and their generator matrices $\6A^{[p^j]},\6B^{[p^j]}$ are such that $\6B^{[p^j]} = \6S^{[p^j]} \6A^{[p^j]} \6D^{p^j} \6P. $
\end{theorem}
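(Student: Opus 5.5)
The plan is to use the fact that $\phi_j(x)=x^{p^j}$ is a field automorphism of $\F_q$. Applied entrywise, it commutes with matrix addition and multiplication: for matrices $\6M,\6N$ of compatible sizes we have $(\6M\6N)^{[p^j]}=\6M^{[p^j]}\6N^{[p^j]}$. This holds because each entry of $\6M\6N$ is a sum of products, and $\phi_j$ preserves both sums and products, since $(a+b)^{p^j}=a^{p^j}+b^{p^j}$ in characteristic $p$.

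The key steps, in order, are as follows.

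First, note that the rows of $\6A^{[p^j]}$ generate $\acode^{[p^j]}$. A codeword $\6x=\6u\6A$ maps to $\phi_j(\6x)=\6u^{[p^j]}\6A^{[p^j]}$, and $\6u\mapsto \6u^{[p^j]}$ is a bijection of $\F_q^k$. The rows of $\6A^{[p^j]}$ also stay linearly independent. Indeed, $\6S\6A$ full rank means there is a $k\times k$ minor $\det\neq 0$, and $\phi_j$ of that minor is the corresponding minor of $\6A^{[p^j]}$, which is nonzero since $\phi_j$ is injective. The same holds for $\6B$.

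Second, apply $\phi_j$ entrywise to the identity $\6B=\6S\6A\6D\6P$ and use multiplicativity to get
\[
\6B^{[p^j]}=\6S^{[p^j]}\6A^{[p^j]}\6D^{[p^j]}\6P^{[p^j]}.
\]

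Third, simplify each factor.
\begin{itemize}
\item $\6P$ has entries in $\{0,1\}$, which are fixed by $\phi_j$, so $\6P^{[p^j]}=\6P$.
\item $\6D$ is diagonal with entries $d_u\in\F_q^\star$, so $\6D^{[p^j]}=\mathrm{diag}(d_1^{p^j},\dots,d_n^{p^j})=\6D^{p^j}$. Its entries are nonzero, so it is again an admissible invertible diagonal matrix.
\item $\det(\6S^{[p^j]})=\phi_j(\det \6S)\neq 0$, so $\6S^{[p^j]}\in\GL_k(\F_q)$.
\end{itemize}
Together these give the claimed relation, and hence the linear equivalence of $\acode^{[p^j]}$ and $\bcode^{[p^j]}$ in the sense of Definition~\ref{def:CE}.

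No step is a real obstacle. The only point needing care is verifying that entrywise Frobenius respects matrix products and determinants; this reduces to additivity of $x\mapsto x^{p}$ in characteristic $p$, iterated $j$ times.
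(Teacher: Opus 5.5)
Your proof is correct. The paper states this theorem without any proof, treating it as immediate from exactly the facts you use: entrywise Frobenius $x\mapsto x^{p^j}$ is a ring automorphism of $\F_q$, so it commutes with matrix products and determinants, fixes the $0/1$ entries of $\bP$, and sends $\bD$ to $\bD^{p^j}$.

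One harmless slip: in your first step, the matrix whose nonzero $k\times k$ minor you need is $\bA$ itself, not $\bS\bA$.
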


Thus, using the Frobenius automorphism we have the following construction: let $\mathcal{A},\mathcal{B}\subseteq \9F_q^n$, with $q=p^m$. 
Then, we pick
\[
\mathcal{A}_1\eqdef\mathcal{A}_2\eqdef \mathcal{A}^{(p-1)}\star \dots\star(\mathcal{A}^{(p-1)})^{[p^{m-1}]}
\]\[
 \mathcal{B}_1\eqdef\mathcal{B}_2\eqdef \mathcal{B}^{(p-1)}\star \dots\star(\mathcal{B}^{(p-1)})^{[p^{m-1}]}.
\]
We have:
\begin{align*}
\dim_{\9F_q} (\mathcal{A}_1) &\le \prod_{i=0}^{m-1} \dim_{\9F_q }((\mathcal{A}^{(p-1)})^{[p^{i}]})\\ & =\prod_{i=0}^{m-1} \dim_{\9F_q }(\mathcal{A}^{(p-1)}) \le \binom{k+p-2}{p-1}^m.
\end{align*}
Once again, $\mathcal{A}_1\cap\mathcal{A}_2^\perp=\7H(\mathcal{A}_1), \mathcal{B}_1\cap\mathcal{B}_2^\perp=\7H(\mathcal{B}_1)$ and  we expect the dimension of $\7H(\mathcal{A}_1)$ and $\7H(\mathcal{B}_1)$ to be 0 with high probability.
Note that
$$\sum_{i=0}^{m-1} (p-1)p^i=(p-1)\frac{p^m-1}{p-1}=q-1.$$
Since $(q-1)\mid 2(q-1)=i+j$, Proposition~\ref{prop: generalapproach_LEP} thus implies that Algorithm~\ref{alg:diag} provides a PPT algorithm for the average LEP with high probability of success, for $q=p^m$ and whenever $\binom{k+p-2}{p-1}^m<n$, which for fixed $q$ asymptotically becomes
\[k < \sqrt[(p-1)m]{(p-1)!^m \cdot n}\cdot (1+o(1)).\]
This is the first case where $(q-1) \mid (i+j)$, but $q-1\ne i+j$. It is not a coincidence that, differently from the previous constructions, this choice also allows one to attack instances defined over fields of characteristic 2.

On the other hand, for odd field size, the factor 2 in $i+j=2(q-1)$ can be removed,  extending the range of weak parameters.
Indeed, let $q=p^m$, $p=2r+1$ and
\[
\mathcal{A}_1\eqdef\mathcal{A}_2\eqdef \mathcal{A}^{(r)}\star \dots\star(\mathcal{A}^{(r)})^{[p^{m-1}]},
\]
\[
\mathcal{B}_1\eqdef\mathcal{B}_2\eqdef \mathcal{B}^{(r)}\star \dots\star(\mathcal{B}^{(r)})^{[p^{m-1}]}.
\]
Similarly as before, the dimension of $\mathcal{A}_1=\mathcal{A}_2$ (and analogously of $\mathcal{B}_1=\mathcal{B}_2$) can be upper bounded as
\begin{align*} 
\dim_{\9F_q} (\mathcal{A}_1)& \le \prod_{i=0}^{m-1} \dim_{\9F_q }((\mathcal{A}^{(r)})^{[p^{i}]}) \le \binom{k+r-1}{r}^m.
\end{align*}
As before, we have that Algorithm~\ref{alg:diag} provides a PPT algorithm for the average LEP with high probability of success, for $q=(2r+1)^m$ and whenever $\binom{k+r-1}{r}^m<n$, which for fixed $q$ asymptotically becomes
\[k < \sqrt[rm]{r!^m \cdot n} \cdot (1+o(1)).\]

\subsection{Hermitian hulls and further} \label{ss: hermitian}
The analysis shown before by means of the Frobenius automorphism can be further improved for fields that are even-degree extensions of   prime fields, as we show next.

Let $q=p^{2\ell}$ and
\[
\mathcal{A}_1\eqdef \mathcal{A}^{(p-1)}\star \dots\star(\mathcal{A}^{(p-1)})^{[p^{\ell-1}]},\quad\mathcal{A}_2\eqdef \mathcal{A}_1^{[p^\ell]}
\]
\[\mathcal{B}_1\eqdef \mathcal{B}^{(p-1)}\star \dots\star(\mathcal{B}^{(p-1)})^{[p^{\ell-1}]},\quad \mathcal{B}_2\eqdef \mathcal{B}_1^{[p^\ell]}.
\]
With the same argument as before, the dimension of code \newline$\mathcal{A}_1$ (and analogously of $\mathcal{A}_2,\mathcal{B}_1,\mathcal{B}_2$) can be upper bounded as
\begin{align*}
\dim_{\9F_q} (\mathcal{A}_1 )&\le \prod_{i=0}^{\ell-1} \dim_{\9F_q }((\mathcal{A}^{(p-1)})^{[p^{i}]})
\le \binom{k+p-2}{p-1}^\ell.
\end{align*}
Differently from the previous constructions, this time we set $\mathcal{A}_1\ne\mathcal{A}_2, \mathcal{B}_1\ne\mathcal{B}_2$.
However, the target intersections can be reinterpreted as Hermitian hulls: $\mathcal{A}_1\cap\mathcal{A}_2^\perp=\7H_H(\mathcal{A}_1), \mathcal{B}_1\cap\mathcal{B}_2^\perp=\7H_H(\mathcal{B}_1)$.
Such Hermitian hulls are expected to be trivial with high probability as well. Let us define
$$t=\sum_{i=0}^{\ell-1} (p-1)p^i=(p-1)\frac{p^\ell-1}{p-1}.$$
Since
$$q-1=(p-1)\frac{p^{2\ell}-1}{p-1}= \sum_{i=0}^{2\ell-1} (p-1)p^i =(t+p^\ell t),$$
which clearly divides $(t+p^\ell t)=i+ j,$
and since $\dim_{\9F_q}(\mathcal{A}_1)=\dim_{\9F_q}(\mathcal{A}_2)$, Proposition~\ref{prop: generalapproach_LEP} thus implies that Algorithm~\ref{alg:diag} provides a PPT algorithm for the average LEP with high probability of success, for $q=p^{2\ell}$ and whenever $\binom{k+p-2}{p-1}^\ell<n$.
For $q$ fixed this asymptotically becomes
\[k < \sqrt[(p-1)\ell]{(p-1)!^\ell \cdot n} \cdot (1+o(1)).\]

\subsection{Odd degree extensions} \label{ss: odd_deg}

Finally, we show an additional improvement, applicable to odd degree extensions of odd characteristic fields, which cannot be interpreted as a hull (neither Euclidean nor Hermitian). Our framework is here applied in its most general form.

Let $q=p^{2\ell+1}$, $p=2r+1$ and
\[
\mathcal{A}_1\eqdef \mathcal{A}^{(p-1)}\star \dots\star(\mathcal{A}^{(p-1)})^{[p^{\ell-1}]}\star (\mathcal{A}^{(r)})^{[p^\ell]},
\]\[
\mathcal{A}_2\eqdef (\mathcal{A}^{(r)})^{[p^\ell]}\star(\mathcal{A}^{(p-1)})^{[p^{\ell+1}]}\star \dots\star(\mathcal{A}^{(p-1)})^{[p^{2\ell}]},\]
\[
\mathcal{B}_1\eqdef \mathcal{B}^{(p-1)}\star \dots\star(\mathcal{B}^{(p-1)})^{[p^{\ell-1}]}\star (\mathcal{B}^{(r)})^{[p^\ell]},
\]\[
\mathcal{B}_2\eqdef (\mathcal{B}^{(r)})^{[p^\ell]}\star(\mathcal{B}^{(p-1)})^{[p^{\ell+1}]}\star \dots\star(\mathcal{B}^{(p-1)})^{[p^{2\ell}]}.
\]
The dimension of $\mathcal{A}_1$ (and analogously of $\mathcal{A}_2,\mathcal{B}_1,\mathcal{B}_2$) can be upper bounded in the following way
\begin{align*}
\dim_{\9F_q}( \mathcal{A}_1) &\le  \dim_{\9F_q }((\mathcal{A}^{(r)})^{[p^{\ell}]})\cdot\prod_{i=0}^{\ell-1} \dim_{\9F_q }((\mathcal{A}^{(p-1)})^{[p^{i}]})\\
&\le \binom{k+p-2}{p-1}^\ell\cdot \binom{k+r-1}{r}.
\end{align*}
If $\6B=\6S\6A\6D\6P$, then $\6B_1=\6S_1 \6A_1\6D^{i}\6P$, $\6B_2=\6S_2 \6A_2\6D^{j}\6P$, where $i=(p-1)+(p-1)p+\dots+(p-1)p^{\ell-1}+rp^\ell$ and $j=rp^\ell+(p-1)p^{\ell+1}+\dots+(p-1)p^{2\ell}$. As usual, we can check that $q-1=(i+j)\mid (i+j)$ and that $\dim_{\9F_q}(\mathcal{A}_1)=\dim_{\9F_q}(\mathcal{A}_2)$.
Experiments reported in Table \ref{tab: LEP_prob} indicate that we get with good probability $\mathcal{A}_1\cap\mathcal{A}_2^\bot=\mathcal{B}_1\cap\mathcal{B}_2^\bot=\{\mathbf{0}\}$.
Similar to \cite{shur}, such probabilities could be further improved by applying a refinement as in the Support Splitting Algorithm \cite{SSA}. 
Thus, Proposition~\ref{prop: generalapproach_LEP} implies that Algorithm~\ref{alg:diag} provides a PPT algorithm for the average LEP with high probability of success, for $q=(2r+1)^{2\ell+1}$ and whenever $\binom{k+p-2}{p-1}^\ell\cdot \binom{k+r-1}{r}=\binom{k+2r-1}{2r}^\ell \cdot \binom{k+r-1}{r}<n$, which asymptotically relaxes the constraint on $k$ to
\[k < \sqrt[2r \ell +r]{(2r)!^\ell r! \cdot n}\cdot  (1+o(1)).\]
Table~\ref{tab: LEP_summary} summarizes the range of parameters that can be attacked within our framework, depending on the field size~$q$.
\begin{table}[t]
\caption{Asymptotic upper bound for $n \to \infty$ and $q$ fixed on the dimension that can be attacked by Algorithm~\ref{alg:diag}}
\label{tab: LEP_summary}
\centering
\begin{tabular}{|c|c|}
\hline
Field size form & Asymptotic upper bound on the dimension\\
\hline 
$q=2r+1$ & $k< \sqrt[r]{r!\cdot  n}\cdot (1+o(1))$ \\
$q=2^m$ & $k < \sqrt[m]{n} \cdot (1+o(1))$ \\
$q=(2r+1)^m$ & $k < \sqrt[rm]{r!^m \cdot n} \cdot (1+o(1))$ \\
$q=2^{2\ell}$ & $k < \sqrt[\ell]{n} \cdot (1+o(1))$ \\
$q=(2r+1)^{2\ell+1}$ & $k < \sqrt[2r \ell+r]{(2r)!^\ell r! \cdot n} \cdot (1+o(1))$ \\ \hline
\end{tabular}
\end{table}
Table~\ref{tab: LEP_prob} provides information about the failure rate of our probabilistic algorithm on two random codes $\mathcal{A},\mathcal{B}$ for several parameters. In particular, $T$ is the event that both $\mathcal{A}_1\cap \mathcal{A}_2^\bot$ and $\mathcal{B}_1\cap \mathcal{B}_2^\bot$ are trivial and $M$ denotes the event $\mathsf{Count}(x_{1,1}, \ldots,x_{n,n})\neq \mathsf{Count}(y_{1,1}, \ldots, y_{n,n})$. The experimental conditional probability $\9P(M \mid T)$ is taken over $10^6$ samples. This can be compared with the theoretical estimate given by Eq.~\eqref{eq: prob_countdiag}. In this case, the value of $q$ is not necessarily the size of the field where the codes are defined, but rather the size of the subfield where the diagonal elements $x_{i,i},y_{i,i}$ live, which depends on the definition of $\mathcal{A}_i,\mathcal{B}_i$. For parameters $[300,6]_8$, the diagonal elements always belong to $\9F_2$, for $[100, 12]_9$ to $\9F_3$  and for $[100,8]_{16}$ to $\9F_4$. For all parameters, the discrepancy amounts to less than one order of magnitude.
\renewcommand{\tabcolsep}{5pt}
\begin{table}[t]
\caption{Test failure rate on random codes}
\label{tab: LEP_prob}
\centering
\begin{tabular}{|c|c|c|c|c|}
\hline
Parameters & Construction & $\9P(T)$ & experimental & estimate \\
$[n,k]_q$&&&$\9P(M\mid T)$ & $\9P(M)$ \\ \hline
$[100,10]_5$ &Subsection~\ref{ss: odd}& 0.630 & $1.84\cdot 10^{-4}$ & $3.54\cdot 10^{-5}$ \\
 $[300,6]_8$ & Subsection~\ref{ss: frobenius}& 0.175 & 0.0646 & 0.0326 \\
$[100, 12]_9$ &Subsection~\ref{ss: hermitian}& 0.518 & 0.0125 & $4.135\cdot 10^{-3}$\\
$[100,8]_{16}$ &Subsection~\ref{ss: hermitian}& 0.619 & $1.40\cdot 10^{-3}$ & $3.59\cdot 10^{-4}$ \\
\hline
\end{tabular}
\end{table}

\section{Partial Closure}
To conclude, we introduce a family of intermediate closure operations, referred to as partial closures, and show that they yield a reduction of a broader class of linear equivalence problems to permutation equivalence.

Note that the $\ell$-th power code transforms two linearly equivalent codes into two linearly equivalent codes where the entries of the monomial matrix are in the group $U$ of $\ell$-th powers in $\mathbb{F}_q^\star$.

\begin{definition}
    The (search version of the) \textit{code equivalence problem with entries in the subgroup} $U \subseteq \F_q^\star$ (written LEP($U$)) asks, given two codes $\acode, \bcode$ with generator matrices $\bA, \bB \in \mathbb F_q^{k\times n}$, whether there exist an invertible matrix $\bS \in \mathrm{GL}_k(\mathbb{F}_q)$, a diagonal matrix $\bD = \mathrm{diag}(\mathbf d)$ with $\mathbf d \in U^n$, and a permutation matrix $\bP \in S_n$ such that $\bB = \bS \bA \bD \bP$, and to find such a triple.
\end{definition}

LEP, PEP and the Signed Permutation Equivalence Problem fall naturally in LEP($U$) with $U = \F_q^\star, U = \{1\}$ and $U =\{\pm 1\}$.

A natural question to ask is what the complexity of this problem is. For this, we give a reduction of LEP($U$) to permutation equivalence in the following.

\begin{definition}
\label{def:partial_closure}
    Let $\alpha\in \F_q^\star$ be a primitive element, and let $\acode$ be a $[n,k]_q$ linear code generated by $\bA\in \mathbb{F}_q^{k \times n}$.
    Let $r \mid (q-1)$ and define 
    \[
        \adjlpha\eqdef(1,\alpha^{(q-1)/r}, \alpha^{2(q-1)/r}, \ldots, \alpha^{(r-1)(q-1)/r}).
    \]
    The $r$-th \emph{partial closure} is defined as the span of $\adjlpha \otimes \bA$ and denoted by $\adjlpha \otimes \acode $.
\end{definition}

\begin{remark}
    The special case $r = q - 1$ is the usual closure and the case $r = 1$ does not change the code at all.
\end{remark}

Let $U = \langle \alpha^{(q-1)/r} \rangle$ be the subgroup of $\mathbb{F}_q^\star$ generated by $\alpha^{(q-1)/r}$.
Note that for $\adjlpha$ as above, we still have $\adjlpha \adjlpha^\top = \sum_{\beta \in U} \beta^2=0$ as long as $|U|=r>2$ by orthogonality of characters since there exists an element $\beta$ such that $\beta^2 \ne 1$ and since the map $x \to \beta x $ is a permutation of $U$.

Now we can apply the partial closure.
Note that while we still end up with a self-orthogonal code, the length is now smaller and we get a better reduction to PEP than by just taking the closure.
\begin{theorem}
    Let $\acode, \bcode$ be two $[n,k]_q$ linear codes and $\adjlpha$ as in Definition \ref{def:partial_closure}.
    Then $\acode$ is equivalent to $ \bcode$ with entries in the subgroup $U$  generated by $\alpha^{\frac{q-1}{r}}$ if and only if 
    $\adjlpha \otimes \acode$ is permutation equivalent to $\adjlpha \otimes \bcode$.
\end{theorem}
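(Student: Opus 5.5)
Both directions are handled by making explicit the coordinate structure of $\adjlpha \otimes \bA$. Index the $rn$ columns by pairs $(s,c)$ with $s\in\{0,\dots,r-1\}$ and $c\in\{1,\dots,n\}$, so that column $(s,c)$ equals $\beta^s\mathbf a_c$, where $\beta=\alpha^{(q-1)/r}$ and $\mathbf a_c$ is the $c$-th column of $\bA$. Up to a fixed reordering of coordinates, which does not affect permutation equivalence, the partial closure is therefore the code whose columns are all $u\mathbf a_c$ with $u\in U$ and $1\le c\le n$. In this description the code is built from the multiset of ``$U$-orbits'' $\{U\mathbf a_c\}$ of the columns.

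For the forward direction, suppose $\bB=\bS\bA\bD\bP$ with $\mathbf d\in U^n$. Column $\pi(c)$ of $\bB$ is then $d_c\bS\mathbf a_c$ for the permutation $\pi$ encoded by $\bP$. Hence the columns of $\adjlpha\otimes\bB$ are $u d_c\bS\mathbf a_c$ for $u\in U$. Since multiplication by $d_c\in U$ permutes $U$, the multiset $\{u d_c \bS\mathbf a_c\}_{u\in U}$ coincides with $\{u\bS\mathbf a_c\}_{u\in U}$. So $\adjlpha\otimes\bB=\bS(\adjlpha\otimes\bA)\bP'$ for an explicit permutation $\bP'$ of the $rn$ coordinates: it sends $(s,c)$ to $(s',\pi(c))$, where $\beta^{s'}=\beta^s d_c^{-1}$ or the analogous index shift. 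This direction is routine.

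The converse is the real content. Suppose $\adjlpha\otimes\bB=\bS'(\adjlpha\otimes\bA)\bP'$. I first need that the dimension of $\adjlpha\otimes\acode$ is still $k$; this holds because $\bA$ itself appears as the block $s=0$, so $\bS'\in\GL_k(\F_q)$. Comparing columns then gives a bijection $\sigma$ on pairs such that column $(s,c)$ of $\adjlpha\otimes\bB$, namely $\beta^s\mathbf b_c$, equals $\bS'\beta^{s'}\mathbf a_{c'}$ where $(s',c')=\sigma(s,c)$.

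The obstacle is to extract a genuine map $c\mapsto c'$ between the original $n$ coordinates; the bijection $\sigma$ need not respect the block structure, and some columns may coincide. My approach is to group columns by $U$-orbits. The multiset of columns of $\adjlpha\otimes\bB$ equals $\bS'$ applied to the multiset of columns of $\adjlpha\otimes\bA$. Both multisets are unions of full $U$-orbits, with each $\mathbf b_c$ contributing the orbit $U\mathbf b_c$ once, so the multiplicity of each orbit $U\mathbf v$ is determined by the multiset of orbits $\{U\mathbf b_c\}$. Zero columns need separate care: they contribute $r$ copies of $\mathbf 0$ each, so they too are matched in count. It follows that the multiset $\{U\mathbf b_c\}_c$ equals $\{U\bS'\mathbf a_c\}_c$. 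From this I choose a bijection $\pi$ with $U\mathbf b_{\pi(c)}=U\bS'\mathbf a_c$, which means $\mathbf b_{\pi(c)}=d_c\bS'\mathbf a_c$ for some $d_c\in U$ (for zero columns take $d_c=1$). This yields $\bB=\bS'\bA\bD\bP$ with $\mathbf d\in U^n$.

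The delicate point is making the multiset-of-orbits counting rigorous when distinct $\mathbf a_c$ lie in the same orbit. I will handle it by counting, for each orbit $O$, the number of columns lying in $O$: this number is $|O|$ times the number of indices $c$ with $U\mathbf a_c=O$. I will also verify that $|O|=r$ for nonzero $\mathbf v$, since $u\mathbf v=\mathbf v$ forces $u=1$. That makes the division legitimate.
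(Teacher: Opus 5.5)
Your proof is correct and takes essentially the route the paper intends: the paper omits the argument and defers to the Sendrier--Simos closure reduction, and your proof adapts that reduction directly to the subgroup $U$, with a block-shifting permutation for the forward direction and column matching for the converse. Your orbit-counting step handles the one delicate case, where the given permutation of the $rn$ coordinates does not respect blocks (e.g.\ zero or proportional columns), because every nonzero column contributes exactly one full $U$-orbit of size $r$ and every zero column contributes $r$ zeros.
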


The proof for the reduction is essentially the same as for Theorem \ref{thm:reduction_LEP_PEP} in \cite[Theorem 1]{hull} and thus omitted. 

\section*{Acknowledgements}

Michele Battagliola and Paolo Santini are supported by the Italian Ministry of University and Research (MUR) under the PRIN 2022 program with projects ``Mathematical Primitives for Post Quantum Digital Signatures'' \\ (CUP I53D23006580001), ``Post quantum Identification and eNcryption primiTives: dEsign and Realization (POINTER)'' (CUP I53D23003670006) and  the Italian Fund for Applied Science (FISA 2022), project “Quantum-safe cryptographic tools for the protection of national data and information technology assets” (QSAFEIT) - No. FISA 2022-00618 (CUP I33C24000520001).
Michael Schaller is supported by the Swiss National Science Foundation under SNSF grant number 212865.
Violetta Weger  is supported by the Technical University of Munich – Institute for Advanced Study, funded by the German Excellence Initiative. 

This paper is the result of merging two independently conducted works, developed in parallel and later unified into this manuscript. Anna-Lena Horlemann, Abhinaba Mazumder, Michael Schaller and Violetta Weger would like to thank the organizers and sponsors of the Virginia Tech-Swiss Summer School 2024, where parts of this work were initiated.

\printbibliography

@inproceedings{biasse2020less,
  title={{LESS is more: code-based signatures without syndromes}},
  author={Biasse, Jean-Fran{\c{c}}ois and Micheli, Giacomo and Persichetti, Edoardo and Santini, Paolo},
  booktitle={International Conference on Cryptology in Africa},
  pages={45--65},
  year={2020},
  organization={Springer}
}

@article{barenghi2022advanced,
  title={Advanced signature functionalities from the code equivalence problem},
  author={Barenghi, Alessandro and Biasse, Jean-Fran{\c{c}}ois and Ngo, Tran and Persichetti, Edoardo and Santini, Paolo},
  journal={International Journal of Computer Mathematics: Computer Systems Theory},
  volume={7},
  number={2},
  pages={112--128},
  year={2022},
  publisher={Taylor \& Francis}
}

@article{budroni2025two,
  title={Two Is All It Takes: Asymptotic and Concrete Improvements for Solving Code Equivalence},
  author={Budroni, Alessandro and Esser, Andre and Franch, Ermes and Natale, Andrea},
  journal={Cryptology ePrint Archive},
  year={2025}
}

@article{baldi2025speck,
  title={{SPECK: Signatures from Permutation Equivalence of Codes and Kernels}},
  author={Baldi, Marco and Battagliola, Michele and El Mechri, Rahmi and Santini, Paolo and Schiavoni, Riccardo and De Zuane, Davide},
  journal={Cryptology ePrint Archive},
  year={2025}
}

@article{battagliola2025vole,
  title={{VOLE-in-the-Head Signatures Based on the Linear Code Equivalence Problem}},
  author={Battagliola, Michele and Mattiuz, Laura and Meneghetti, Alessio},
  journal={Cryptology ePrint Archive},
  year={2025}
}

@inproceedings{hanzlik2025tanuki,
  title={{Tanuki: New Frameworks for (Concurrently Secure) Blind Signatures from Post-Quantum Group Actions}},
  author={Hanzlik, Lucjan and Lai, Yi-Fu and Mula, Marzio and Paracucchi, Eugenio and Slamanig, Daniel and Tang, Gang},
  booktitle={International Conference on the Theory and Application of Cryptology and Information Security},
  pages={35--69},
  year={2025},
  organization={Springer}
}

@inproceedings{sendrier2013hardness,
  title={The hardness of code equivalence over $\mathbb{F}_q$ and its application to code-based cryptography},
  author={Sendrier, Nicolas and Simos, Dimitris E},
  booktitle={International Workshop on Post-Quantum Cryptography},
  pages={203--216},
  year={2013},
  organization={Springer}
}

@inproceedings{beullens2020not,
  title={Not enough LESS: an improved algorithm for solving code equivalence problems over {$\mathbb{F}_q$}},
  author={Beullens, Ward},
  booktitle={International Conference on Selected Areas in Cryptography},
  pages={387--403},
  year={2020},
  organization={Springer}
}

@article{chou2025linear,
  title={On linear equivalence, canonical forms, and digital signatures},
  author={Chou, Tung and Persichetti, Edoardo and Santini, Paolo},
  journal={Designs, Codes and Cryptography},
  pages={1--43},
  year={2025},
  publisher={Springer}
}

@inproceedings{nowakowski2025improved,
  title={An improved algorithm for code equivalence},
  author={Nowakowski, Julian},
  booktitle={International Conference on Post-Quantum Cryptography},
  pages={71--103},
  year={2025},
  organization={Springer}
}

@article{SSA,
  title={Finding the permutation between equivalent linear codes: The support splitting algorithm},
  author={Sendrier, Nicolas},
  journal={IEEE Transactions on Information Theory},
  volume={46},
  number={4},
  pages={1193--1203},
  year={2000},
  publisher={IEEE}
}

@phdthesis{saeed2017algebraic,
  title={Algebraic approach for code equivalence},
  author={Saeed, Mohamed Ahmed},
  year={2017},
  school={Normandie Universit{\'e}; University of Khartoum}
}

@inproceedings{magali,
  title={Permutation code equivalence is not harder than graph isomorphism when hulls are trivial},
  author={Bardet, Magali and Otmani, Ayoub and Saeed-Taha, Mohamed},
  booktitle={2019 IEEE International Symposium on Information Theory (ISIT)},
  pages={2464--2468},
  year={2019},
  organization={IEEE}
}

@inproceedings{hull,
  title={How easy is code equivalence over $\mathbb{F}_q$?},
  author={Sendrier, Nicolas and Simos, Dimitrios E},
  booktitle={International Workshop on Coding and Cryptography-WCC 2013},
  year={2013},
  url={https://inria.hal.science/hal-00790861/document}
}

@book{Huffman_Pless_2003, 
place={Cambridge}, 
title={Fundamentals of Error-Correcting Codes}, publisher={Cambridge University Press},
author={Huffman, W. Cary and Pless, Vera}, 
year={2003}
}

@misc{shur,
      author = {Michele Battagliola and Rocco Mora and Paolo Santini},
      title = {Using the {Schur} Product to Solve the Code Equivalence Problem},
      howpublished = {Cryptology {ePrint} Archive, Paper 2025/1017},
      year = {2025},
      url = {https://eprint.iacr.org/2025/1017}
}

@article{SendrierHull,
author = {Sendrier, Nicolas},
title = {On the Dimension of the Hull},
journal = {SIAM Journal on Discrete Mathematics},
volume = {10},
number = {2},
pages = {282-293},
year = {1997},
URL = {https://doi.org/10.1137/S0895480195294027}
}

@InProceedings{albrecht2025hollow,
author="Albrecht, Martin R.
and Ben{\v{c}}ina, Benjamin
and Lai, Russell W. F.",
editor="Fehr, Serge
and Fouque, Pierre-Alain",
title="{Hollow LWE: A New Spin}",
booktitle="Advances in Cryptology -- EUROCRYPT 2025",
year="2025",
publisher="Springer Nature Switzerland",
address="Cham",
pages="363--392",
isbn="978-3-031-91101-9"
}

@article{richmond2009counting,
  title={Counting {Abelian} Squares},
  author={Richmond, LB and Shallit, Jeffrey},
  journal={the electronic journal of combinatorics},
  volume={16},
  number={R72},
  pages={1},
  year={2009},
  publisher={Citeseer}
}

\end{document}